\documentclass[11pt]{article}
\usepackage[T1]{fontenc}
\usepackage{lmodern}
\usepackage[a4paper,margin=1in]{geometry}
\usepackage{amsmath,amssymb,amsthm,mathtools}
\usepackage{microtype}
\usepackage[hidelinks]{hyperref}
\hypersetup{
  pdftitle={A Tight Second-Order Lower Bound for Routing Labels in Trees},
  pdfauthor={Hanqing Li}
}

\newtheorem{theorem}{Theorem}
\newtheorem{lemma}[theorem]{Lemma}
\newtheorem{proposition}[theorem]{Proposition}

\newcommand{\cC}{\mathcal C}
\newcommand{\Emb}{\operatorname{Emb}}

\title{A Tight Second-Order Lower Bound for Routing Labels in Trees}
\author{Hanqing Li\\Peking University}
\date{}

\begin{document}
\maketitle

\begin{abstract}
In the designer-port routing-labeling problem, every vertex of a rooted tree
receives a binary label and the child edges receive distinct port numbers.
Given only the labels of a source and a destination, a decoder must return the
first port on their path.  Gawrychowski, Janczewski, and
\L opusza\'nski gave labels of length
$\log_2 n+O((\log_2\log_2 n)^2)$, whereas the previous lower bound was
$\log_2 n+\Omega(\log_2\log_2 n)$.  We prove that every scheme for all
$n$-vertex trees needs a label of length
$\log_2 n+\Omega((\log_2\log_2 n)^2)$ for every sufficiently large $n$.
The result allows arbitrary port assignments and imposes no computational
restriction on either the encoder or the decoder.  Thus the second-order
term in the optimal worst-case label length is determined up to constant
factors.
\end{abstract}

\section{Introduction}

Routing labels are a compact representation of the first edge on the unique
path between two vertices of a tree.  In the designer-port model the encoder
may choose the numbering of the child edges, making lower bounds especially
delicate: local port names can change from one input tree to another.
The upper bound of Gawrychowski, Janczewski, and \L opusza\'nski
\cite{GawrychowskiJanczewskiLopuszanski2021} is
$\log n+O((\log\log n)^2)$ bits.  Their lower bound is
$\log n+\Omega(\log\log n)$, and they explicitly ask whether the squared
term is necessary.  The earlier construction of Thorup and Zwick
\cite{ThorupZwick2001} established labels of $(1+o(1))\log n$ bits.

We answer the second-order question.  All logarithms below have base two.
The constants implicit in our lower bound are absolute, though very small.

\begin{theorem}\label{thm:main}
There are constants $\varepsilon>0$ and $n_0$ such that, for every
$n\ge n_0$, any designer-port routing-labeling scheme for all rooted
$n$-vertex trees uses, on some tree and vertex, at least
\[
  \log n+\varepsilon(\log\log n)^2-O(1)
\]
bits.  This holds without restrictions on encoding time, decoding time,
or the choice of port numbers.
\end{theorem}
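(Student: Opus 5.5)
We will prove the bound by a counting argument over a recursively defined family of trees, amplifying the known $\Omega(\log\log n)$ lower bound one level at a time. Fix $k\approx \varepsilon\log\log n$ levels and branching parameters $d_1,\dots,d_k$ with $\log d_i\approx \log\log n$. We build a family $\cC_k$ of rooted trees. A tree at level $i$ consists of a root with $d_i$ heavy children, each the root of a level-$(i-1)$ tree, and a set of light subtrees attached so that the total size is exactly $n$. The sizes of the heavy children are chosen from a range of roughly $2^{\Theta(\log\log n)}$ distinct sizes. Because the tree is determined only up to isomorphism, the encoder may choose ports freely. The only obstruction to the encoder will be that the decoder must distinguish $d_i$ heavy children from the labels alone.

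The core step is a one-level lemma, proved by an entropy/counting argument. Suppose every label has length at most $\log n+L$. Take the root $r$ of a level-$i$ tree and a destination $v$ in its subtree. The decoder, given $\ell(r)$ and $\ell(v)$, must output the port of the child containing $v$. Hence the map $v\mapsto\ell(v)$ must separate the $d_i$ child subtrees, for each fixed value of $\ell(r)$. We now treat labels as integers in $[0,2^{\log n+L})$ and look at the multiset of labels in each child subtree. These multisets have prescribed sizes $s_1,\dots,s_{d_i}$ and together account for nearly $n$ labels out of a label space of size $n2^{L}$. We will show that, averaged over the family, roughly $\Omega(\log d_i)=\Omega(\log\log n)$ bits of ``slack'' are consumed at this level: at least a $2^{-c\log\log n}$ fraction of the label space is wasted or must be reused. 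We then recurse into a typical child with the residual slack.

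The recursion is carried out through a potential: let $\Phi_i$ be the $\log$ of the ratio between the size of the label space still available inside a level-$i$ subtree and the number of vertices of that subtree. The lemma gives $\Phi_{i-1}\le \Phi_i-\delta\log\log n$ for some typical child, and the potential must stay nonnegative at the leaves. It follows that $L\ge\Phi_k\ge k\delta\log\log n=\Omega((\log\log n)^2)$. To handle all $n\ge n_0$ rather than special values, we pad with a path hanging off the root, which costs only the $O(1)$ term.

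The main obstacle is the one-level lemma under designer ports. The encoder may correlate ports and labels across the whole tree, and it may reuse the same labels in different subtrees, since only source/destination pairs sharing an ancestor structure matter. A naive pigeonhole argument therefore yields only $\log d_i$ bits once, not once per level. My first attempt is to choose the child sizes adversarially from a large set, randomly over the family. This forces the label sets of the children to be ``interval-like'' in order to support recursion, so that a loss of $\Omega(\log\log n)$ bits persists into the chosen child instead of being amortized away. I would formalize this with a compression argument: a scheme with slack $L$ would let us encode a random member of $\cC_k$ in fewer bits than $\log|\cC_k|$.
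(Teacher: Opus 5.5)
Your proposal has a genuine gap. Everything rests on the one-level lemma, which you state as a goal and then concede is unresolved. Moreover, in the form you state it the lemma is false. It asserts a fixed loss of $\delta\log\log n$ in potential for a root with $d_i$ heavy children, $\log d_i\approx\log\log n$, with no dependence on depth or on the remaining slack. Each level shrinks subtree sizes only by a factor of roughly $d_i=(\log n)^{\Theta(1)}$, so nothing stops you from stacking $\Theta(\log n/\log\log n)$ levels instead of $\varepsilon\log\log n$. Your telescoping would then give $L=\Omega(\log n)$, contradicting the known $\log n+O((\log\log n)^2)$ upper bound.

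The true cost of such a level is tiny. In the paper's normalization a subtree of size $s$ gets capacity $A_s=\theta sF_{m_s}$ with $\log F_m=\varepsilon(\log(m+2))^2$. So shrinking sizes by a factor $\log n$ costs only $O(\varepsilon(\log m)^2/m)$ bits. The $(\log\log n)^2$ comes from summing a $\Theta(\varepsilon\log m/m)$ loss per halving of size over all $m\le\log n$ dyadic scales. Your own remark that a $2^{-c\log\log n}$ fraction of the label space is wasted points the same way: wasting that fraction costs $O((\log n)^{-c})$ bits, not $\Omega(\log\log n)$.

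The second missing piece is a way to defeat an adaptive encoder. The encoder chooses the root label, the port bijection, and the split of label space among children after seeing the tree. Recursing into a \emph{typical child} therefore fails: the encoder can starve some children and overfeed others, so no particular child need lose slack. The paper uses the decoder's partition of $S$ into classes $S_p(r)$ for a fixed root label $r$, which is your separation observation. It then needs failure probabilities small enough to union-bound over every $r\in S$ and every port bijection, which leads to two interleaved induction hypotheses.
\begin{itemize}
\item The weak one (failure at most $s^{-10}$ when $|S|\le A_s$) uses a codebook of at least $2^{40m}$ child-size profiles. They have identical mass at every dyadic level, but their sorted demand vectors $(d(s_j(z)))_j$ are separated in $L_1$ by more than twice the slack $P=A_n-\sum_j d(s_j(z))$. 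Hence a single root label can cover at most one profile.
\item The strong one (failure at most $e^{-m^2}$ under the tighter capacity $d(s)=(1-\xi_m)A_s$) hangs $m^2$ independent weak trees below one vertex. This forces at least $\xi_m m^2/4$ of them to be deficient, and their failure probabilities multiply.
\end{itemize}
Your compression intuition corresponds to the covering step. Without the equal-mass, $L_1$-separated codebook, the exact slack accounting, and the amplification, the argument does not close.
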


Combined with the cited upper bound, this gives the following immediate
consequence.  The quantifier refers to the maximum label length over all
trees with exactly $n$ vertices.

\begin{theorem}\label{thm:tight}
The optimal worst-case label length for designer-port routing in rooted
$n$-vertex trees is $\log n+\Theta((\log\log n)^2)$.
\end{theorem}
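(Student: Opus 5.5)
Theorem~\ref{thm:tight} follows by matching the two bounds for the same quantity, namely the maximum label length over all rooted trees with exactly $n$ vertices, minimized over all designer-port schemes. There are two directions, and I will check that both bounds are stated for this same quantity.

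For the upper bound I invoke the scheme of Gawrychowski, Janczewski, and \L opusza\'nski \cite{GawrychowskiJanczewskiLopuszanski2021}. It assigns labels of length $\log n+O((\log\log n)^2)$ to every vertex of every $n$-vertex tree. The one point to check is that their model is the designer-port model used here. In that model the encoder chooses the child port numbers, and the decoder outputs the first port on the path from source to destination given only the two labels.

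Their scheme is stated for trees of arbitrary size and uses at most $\log n+O((\log\log n)^2)$ bits. For any fixed $n$, the worst case over $n$-vertex trees is therefore bounded this way. If their bound were phrased for trees with at most $n$ vertices, that would only strengthen the conclusion.

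For the lower bound I invoke Theorem~\ref{thm:main}. For every $n\ge n_0$ and every designer-port scheme, some $n$-vertex tree forces a label of length at least $\log n+\varepsilon(\log\log n)^2-O(1)$. Since $(\log\log n)^2\to\infty$, the additive $O(1)$ is absorbed once $n$ is large. Concretely, for $n\ge n_1$ we have $\varepsilon(\log\log n)^2-O(1)\ge(\varepsilon/2)(\log\log n)^2$. This gives the $\Omega((\log\log n)^2)$ second-order term.

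Combining the two directions, the optimal worst-case length $L(n)$ satisfies
\[
c_1(\log\log n)^2\le L(n)-\log n\le c_2(\log\log n)^2
\]
for all sufficiently large $n$. This is exactly the claimed $\Theta$ statement, where $\Theta$ is understood asymptotically so that finitely many small $n$ are ignored.

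I do not expect a genuine obstacle. The only care needed is the model-matching check: confirming that the cited upper bound counts bits per label in the same designer-port setting, for rooted trees, with no hidden extra advice. I also need to confirm that it holds for every $n$ rather than along a subsequence, which matches the "every sufficiently large $n$" quantifier of Theorem~\ref{thm:main}.
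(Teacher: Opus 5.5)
Your proposal is correct and is the paper's argument: the lower bound is Theorem~\ref{thm:main}, with the additive $O(1)$ absorbed into $\varepsilon(\log\log n)^2$ for large $n$, and the upper bound is the cited $\log n+O((\log\log n)^2)$ scheme of Gawrychowski, Janczewski, and \L opusza\'nski. The paper leaves implicit the model-matching check you flag (same worst-case quantity, rooted trees, designer ports, every $n$), so spelling it out only adds care.
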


The proof uses a fixed decoder and a fixed universe of available labels.
For each possible label at a root, the decoder partitions destination labels
among ports.  We build many child-size profiles of equal total mass whose
capacity demands are mutually incompatible.  Random hard subtrees turn a
shortfall in one port class into a negligible probability of a successful
labeling.  A second distribution amplifies this probability so that a union
bound can include all root labels and all port assignments.

\section{Label universes and two induction statements}

A scheme assigns each vertex a finite binary string and assigns distinct
ports $1,\ldots,d$ to the $d$ children of each vertex.  On distinct vertices
$u,w$, the decoder $D(\ell(u),\ell(w))$ must output $0$ if the first edge
goes to the parent of $u$, and otherwise the child's port.  The decoder
has no access to the tree.  For a finite set $S$ of strings, say that a
rooted tree \emph{embeds in $S$} if it has a correct routing labeling using
only labels in $S$, with arbitrary legal choices of child ports.  Write
$\Emb_D(T,S)$ for this event when $T$ is random.

\begin{lemma}\label{lem:interface}
In any correctly labeled tree, vertex labels are pairwise distinct.  For
a fixed root label $r\in S$, define
\[
 S_p(r)=\{x\in S:D(r,x)=p\},\qquad p=1,2,\ldots .
\]
The classes $S_p(r)$ are disjoint.  If a child subtree is assigned port
$p$, every label in that subtree belongs to $S_p(r)$.
\end{lemma}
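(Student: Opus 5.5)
The plan is to derive all three claims directly from the decoder's correctness condition, taking the three assertions in order.

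\emph{Distinctness.} Suppose two distinct vertices $u\neq w$ satisfy $\ell(u)=\ell(w)$. Since they are distinct, correctness applies to the ordered pairs $(u,w)$ and $(w,u)$. I would then consider the path from $u$ to $w$ and split into two cases.

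Case 1: $w$ is not an ancestor of $u$. Then the path from $w$ to $u$ starts with the edge to the parent of $w$ only if $u$ lies outside the subtree of $w$. I will instead pick the pair so as to exploit asymmetry. Say one of the two, $u$, is not a descendant of the other. Then $D(\ell(w),\ell(u))$ must be $0$ if $u\notin T_w$, and a child port $\ge 1$ otherwise.

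The cleaner argument is by symmetry. Because the labels are equal, $D(\ell(u),\ell(w))=D(\ell(w),\ell(u))$. If neither vertex is an ancestor of the other, both answers must be $0$, which is consistent, so the symmetric argument alone fails.

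The fix is to introduce a third vertex. Let $a$ be a vertex such that the routing answers for the pairs $(a,u)$ and $(a,w)$ must differ. Since $\ell(u)=\ell(w)$, we have $D(\ell(a),\ell(u))=D(\ell(a),\ell(w))$, a contradiction. Take $a=u$'s parent when $u$ is not the root. Routing from the parent to $u$ outputs the port of $u$, which is at least $1$. Routing from the parent to $w$ outputs either $0$ or the port of a child whose subtree contains $w$. These agree only if $w$ lies in the subtree of $u$.

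In the remaining case, $w$ lies in the subtree of $u$, so $w\neq u$ is a proper descendant. Now route from $w$: $D(\ell(w),\ell(u))=0$ because the first edge goes to $w$'s parent. Also $D(\ell(w),\ell(u))=D(\ell(u),\ell(u))$, which is not constrained, so I would instead use the pair $(u,w)$. Here $D(\ell(u),\ell(w))$ is a child port $\ge1$, while $D(\ell(w),\ell(u))=0$. These are the same function value $D(x,x)$ with $x=\ell(u)=\ell(w)$, a contradiction.

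If $u$ is the root, then $w$ is a proper descendant of $u$ and the same final argument applies. Either way, the argument needs only the decoder's correctness on distinct vertices, which is why the third vertex, the parent, is used.

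\emph{Disjointness.} This is immediate: $D$ is a function, so each $x$ has a single value $D(r,x)$ and lies in at most one class $S_p(r)$.

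\emph{Port classes.} Let the root $\rho$ have label $r$, and let $c$ be the child with port $p$. For any vertex $x$ in the subtree of $c$, $x\neq\rho$ and the first edge from $\rho$ to $x$ is the edge $\rho c$. Correctness therefore forces $D(r,\ell(x))=p$, so $\ell(x)\in S_p(r)$.

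The only mildly delicate step is distinctness, specifically the case analysis ensuring that a pair of distinct vertices gives contradictory demands on a single value $D(x,x)$ or $D(\ell(a),x)$. The other two parts are one-line consequences of the definitions.
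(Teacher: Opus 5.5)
Your argument is essentially the paper's. Comparable vertices with a common label $x$ force the single value $D(x,x)$ to be both a downward port and $0$. A third vertex rules out the remaining configurations. The last two assertions come from querying the root, exactly as in the paper.

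The one real difference is your choice of third vertex: the parent $a$ of $u$, where the paper uses the least common ancestor of two incomparable vertices. This choice leaves a small hole. If $w=a$, then $(a,w)$ is not a pair of distinct vertices, so correctness imposes nothing on $D(\ell(a),\ell(w))$. Your claim that the two answers ``agree only if $w$ lies in the subtree of $u$'' therefore fails in this case.

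The hole closes easily. When $w=a$, $u$ is a proper descendant of $w$, so your final $D(x,x)$ argument applies with the roles of $u$ and $w$ exchanged. The paper's split into comparable versus incomparable pairs avoids the issue altogether, since the least common ancestor of two incomparable vertices is automatically distinct from both.

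In a written version, also drop the abandoned attempts. This includes the remark that $D(\ell(w),\ell(u))=D(\ell(u),\ell(u))$ is ``not constrained''. That value is forced to be $0$ by correctness on the pair $(w,u)$, which is exactly what your final argument uses.
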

\begin{proof}
If an ancestor and a proper descendant had the same label, the same
ordered pair of labels would have to produce a downward port in one
direction and $0$ in the reverse direction.  If two incomparable vertices
had the same label, their least common ancestor would have to send the
same destination label through two different ports.  Thus labels are
distinct.  The remaining assertions follow by querying the root against
each vertex of a child subtree.
\end{proof}

Fix positive constants in the order
\[
  0<\kappa\ll 1,\qquad 0<\lambda\ll\kappa,
  \qquad 0<\varepsilon\ll\lambda.
\]
Their precise inequalities are specified by the estimates below.  For
$s\ge1$ put
\begin{align*}
 m_s&=\lfloor\log s\rfloor, &
 F_m&=2^{\varepsilon(\log(m+2))^2},\\
 \xi_m&=\min\left\{\frac14,
        \frac{\lambda(\log(m+2))^2}{m+1}\right\}, &
 A_s&=\theta sF_{m_s},\qquad d(s)=(1-\xi_{m_s})A_s.
\end{align*}
Here $\theta>0$ is chosen last.  For fixed constants and all sufficiently
large $m$, $F_m$ is increasing, $\xi_m$ is decreasing, and hence $d(s)$ is
strictly increasing in $s$.  In particular, within a dyadic size level,
$d(s)$ is a constant multiple of $s$.

\begin{proposition}\label{prop:induction}
For every positive integer $s$ there are distributions $W_s$ and $R_s$
on rooted trees with exactly $s$ vertices, independent of $D$ and $S$,
such that for every decoder $D$ and every finite set $S$:
\begin{align}
 |S|\le A_s&\implies
   \Pr[\Emb_D(W_s,S)]\le s^{-10},\label{eq:weak}\\
 |S|\le d(s)&\implies
   \Pr[\Emb_D(R_s,S)]\le e^{-m_s^2}.\label{eq:strong}
\end{align}
\end{proposition}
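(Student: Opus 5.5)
I will prove both implications by a joint strong induction on $s$. For small $s$ (below a threshold depending on the constants), $\theta$ is chosen so small that $A_s<s$. Lemma~\ref{lem:interface} then forces $\Pr[\Emb_D(T,S)]=0$ for every $s$-vertex tree $T$, since labels must be distinct. So any distributions work there.

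\textbf{Weak statement for size $s$.} $W_s$ is built from a root with children drawn from several \emph{profiles}. Fix a family $\mathcal P$ of child-size profiles $(s_1,\dots,s_k)$ with $\sum s_i=s-1$. Each profile uses sizes from a few dyadic levels, in proportions chosen so the profiles' capacity demands are mutually incompatible. $W_s$ picks a profile uniformly from $\mathcal P$ and attaches independent copies of $R_{s_i}$ as the children.

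Fix $D$, $S$ with $|S|\le A_s$, and a root label $r$. The classes $S_p(r)$ partition at most $A_s$ labels. For a fixed profile, a successful embedding needs an injective assignment of children to ports with each child subtree embedding in $S_p(r)$. Since the children are independent, by the strong hypothesis any child with $|S_p(r)|\le d(s_i)$ succeeds with probability at most $e^{-m_{s_i}^2}$.

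The combinatorial core is a counting claim. For any sizes $c_p=|S_p(r)|$ with $\sum c_p\le A_s$, and for all but a tiny fraction of profiles, every port assignment has many children (at least $\Omega(m_s)$ of them, or enough total exponent) placed into classes with $c_p\le d(s_i)$. The slack comes from $A_{s_i}\approx d(s_i)(1+\xi)$ summed against $\sum_i A_{s_i}$ versus $A_s=\theta sF_{m_s}$. The growth of $F$ across levels, of order $F_{m}/F_{m-j}\approx 2^{2\varepsilon j\log m/\,\cdot}$, has to be compensated by the $\xi$-shortfall. This is where the quadratic $(\log m)^2$ enters, via $\log m$ scales each contributing $\log m$.

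I would then take a union bound over root labels ($A_s$ choices), port assignments (at most $k!$, kept small by making $k$ polylogarithmic), and the exceptional profiles. The result is $\Pr\le A_s\cdot k!\cdot e^{-\Omega(m^2)}+|\text{bad}|/|\mathcal P|\le s^{-10}$.

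\textbf{Strong statement.} $R_s$ is the amplification. Take a root whose children are $L$ independent copies of $W_{s'}$ with $s'\approx (s-1)/L$ in a suitable level, say $L\approx m_s^2$ or a constant fraction of it, so that $d(s)\le$ roughly $A_{s'}L$ minus slack. Use $\lambda$ to ensure $d(s)$ is below the sum of the child thresholds by a factor $1-\xi/2$. Then, for any root label and any partition, a constant fraction of the children land in classes of size $\le A_{s'}$. Each such child succeeds with probability at most $s'^{-10}$. Independence gives $s'^{-10\cdot\Omega(L)}$, which beats the union bound over $|S|$ root labels and $L!$ port assignments and yields $e^{-m_s^2}$.

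The main obstacle is the weak step's profile construction. One must produce profiles of equal total mass such that no single partition $\{c_p\}$ of $A_s$ labels serves more than a negligible fraction of them, quantitatively matching $\varepsilon\ll\lambda\ll\kappa$. I would first try profiles indexed by subsets of $\sim\kappa\log m$ dyadic levels with geometric weights, and argue via an LP/averaging (a Kraft-type inequality) that a partition accommodating a profile must overspend by a factor $1+\Omega(\xi)$ on the levels that profile emphasizes.
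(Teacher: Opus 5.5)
\textbf{Gap: the weak step.} Your outer architecture matches the paper's: joint induction with base case $A_s<s$, $R_s$ built from about $m^2$ independent weak trees, and $W_s$ a root over a random profile of strong trees. But the weak step carries the whole argument, and it rests on an unproved ``counting claim''. The concrete routes you propose for it fail.

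Since $D$ is arbitrary, each root label $r$ induces its own partition $\{S_p(r)\}$. Every profile has slack $A_s-\sum_j d(s_j)\ge\theta F_m\xi_m(s-1)$, because $\xi_{m_{s_j}}\ge\xi_m$ and $F_{m_{s_j}}\le F_m$. So a decoder can reserve a root label for each of $|S|$ profiles and leave them with no deficient child. Your bound therefore exceeds $s^{-10}$ unless $|\mathcal P|\ge|S|\,s^{10}$, i.e.\ $\log|\mathcal P|\gtrsim 11m$. Profiles indexed by subsets of dyadic levels number at most $2^{m+1}$, and only $2^{O((\log m)^2)}$ for subsets of size $\kappa\log m$.

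Polylogarithmically many children fail for every choice of profiles. One root label serves all profiles whose sorted size vectors lie coordinatewise below a fixed $u$ with $\sum_j(u_j-s_j)\le\xi_m s/2$. With $\mathrm{polylog}(s)$ children, all but negligible mass lies in $O(\log m)$ dyadic levels. The bracketing entropy of monotone sequences (a level of mass $S_\ell$ needs $2^{O(S_\ell/\rho_\ell)}$ brackets of $L_1$-width $\rho_\ell$) then shows that $2^{O(\log m/\xi_m)}=2^{O(m/(\lambda\log m))}=2^{o(m)}$ brackets cover all profiles. So the induction hypotheses never bite, and your $k!$ union bound is never reached. Note also that in this step the $\varepsilon$- and $\lambda$-terms do not compensate each other: both enlarge the slack, and the profile separation, at scale $\kappa$, must exceed it.

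\textbf{What the paper does instead.} All dyadic-level masses are fixed, with a heavy child absorbing the remainder, so $T=\sum_j d(s_j(z))$ and $P=A_s-T$ are profile-independent. The profile carries $B=\Theta(m/\kappa)\ge 40m$ bits, drawn from a code of minimum distance $B/4$. Each bit is a $\pm\delta_\ell$ shift within a pair of blocks at levels $4\log m\le\ell\le m/(\kappa\log m)^2$. Hence any two demand vectors differ in $L_1$ by more than $2P$, and one root label covers at most one codeword.

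This uses $2^{m^{1-o(1)}}$ children, which costs nothing because the paper never unions over bijections. If the sorted capacities fail to dominate the sorted demands, every bijection sends some child $i$ to a class smaller than $d(s_i)$. So a union over at most $s^2$ child--port pairs suffices, needing only a single deficient child, which embeds with probability at most $e^{-(0.9m)^2}$. Your requirement of many deficient children and a union over all $k!$ assignments is what pushed you toward small $k$.

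\textbf{A correction in the strong step.} Only a $\Theta(\xi_m)$ fraction of the children, not a constant fraction, is forced to be deficient, since the adversary can fully serve the rest. The resulting exponent $\Theta(\lambda m^2(\log m)^2)$ still beats $\log(L!)=O(m^2\log m)$, but only by a factor of order $\lambda\log m$.
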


We prove both statements by induction on $s$.  Choose a sufficiently large
absolute threshold $m_0$ after $\kappa,\lambda,\varepsilon$; then choose
$\theta$ so small that $A_s<s$ for all $s<2^{m_0}$.  For these sizes,
both conclusions hold with any distributions by
Lemma~\ref{lem:interface}.  For each larger $s$, construct $R_s$ from
smaller $W_b$ and then $W_s$ from smaller $R_b$.  This ordering avoids
any circular appeal.

\section{Amplifying the embedding failure}

Let $s\ge2^{m_0}$ and $m=m_s$.  Set $k=\lceil m^2\rceil$.
Choose $k$ distinct positive integers $b_1,\ldots,b_k$ such that
\[
 b_i=s/k+O(k),\qquad \sum_{i=1}^k b_i=s-O(k).
\]
For example, write $s-1=kq+r$, $0\le r<k$, and take the consecutive
integers $q-\lfloor k/2\rfloor,\ldots,q-\lfloor k/2\rfloor+k-1$.
After adding one branching vertex, the unused $O(k)$ vertices form a
unary chain above it.  Attach independent $W_{b_i}$ trees to its $k$
children; the resulting tree has $s$ vertices and defines $R_s$.
All $b_i<s$ and $m_{b_i}=m-O(\log m)$.

\begin{lemma}\label{lem:amplify}
For $m_0$ sufficiently large and $\varepsilon$ sufficiently small relative
to $\lambda$, the distribution $R_s$ satisfies~\eqref{eq:strong}.
\end{lemma}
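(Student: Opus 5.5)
We argue by a union bound over the label of the branching vertex. For each fixed label, a counting argument forces many child subtrees into classes that are too small, and these are independent rare events. Let $v$ be the branching vertex of $R_s$ and $T_1,\ldots,T_k$ the independent copies of $W_{b_1},\ldots,W_{b_k}$ hanging from its children. Suppose $R_s$ embeds in $S$ with $|S|\le d(s)$, and let $r\in S$ be the label of $v$.

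\textbf{Reduction to class sizes.} Applying Lemma~\ref{lem:interface} at $v$, each $T_i$ receives a port $p_i$, the ports are distinct, and $T_i$ embeds in $S_{p_i}(r)$. Hence the sets $S_{p_i}(r)$ are pairwise disjoint and
\[
\sum_i |S_{p_i}(r)|\le d(s).
\]
Call $i$ \emph{deficient} if $|S_{p_i}(r)|\le A_{b_i}$.

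\textbf{Many indices are deficient.} Since $b_i=s/k+O(k)$ with $k=\lceil m^2\rceil$, we have $m_{b_i}=m-2\log m+O(1)$. A direct expansion gives
\[
\frac{F_{m_{b_i}}}{F_m}=2^{-\varepsilon[(\log(m+2))^2-(\log(m_{b_i}+2))^2]}=1-O\!\left(\frac{\varepsilon(\log m)^2}{m}\right).
\]
Using $\sum b_i=s-O(k)$ and $\varepsilon\ll\lambda$, for $m\ge m_0$ this yields
\[
\sum_i A_{b_i}\ge \theta sF_m\bigl(1-\tfrac12\xi_m\bigr)\ge (1+\tfrac14\xi_m)\,d(s).
\]
The $A_{b_i}$ are all equal up to a factor $1+O(k^2/s)$. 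If $J$ denotes the set of deficient indices, then every $i\notin J$ contributes more than $A_{b_i}$ to $\sum_i|S_{p_i}(r)|\le d(s)$. It follows that
\[
|J|\ge c\,\xi_m k\ge c\lambda\, m(\log m)^2
\]
for an absolute $c>0$. Here we used $\xi_m=\lambda(\log(m+2))^2/(m+1)$, which holds once $m_0$ is large enough that the minimum with $\tfrac14$ is inactive.

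\textbf{Independent rare events.} The key point is not to union-bound over full port assignments: there are up to $|S|^k=2^{\Theta(m^3)}$ of them, which is too many. Instead, fix $r$ and, for each $i$, define $E_i(r)$ as the event that $T_i$ embeds in \emph{some} class $S_p(r)$ with $|S_p(r)|\le A_{b_i}$. Since $D$ and $S$ are fixed, the classes are fixed sets, and $E_i(r)$ depends only on $T_i$. So the events are independent across $i$. By induction~\eqref{eq:weak} for $b_i<s$ and a union over at most $|S|\le 2^{2m}$ classes,
\[
\Pr[E_i(r)]\le |S|\,b_i^{-10}\le 2^{2m-10(m-2\log m-O(1))}\le 2^{-7m}.
\]
The embedding event implies that some $r$ and some set $J$ with $|J|\ge c\lambda m(\log m)^2$ satisfy $E_i(r)$ for all $i\in J$. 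Taking a union over $r$ (at most $2^{2m}$ choices) and over $J$ (at most $2^k\le 2^{m^2+1}$ choices) gives
\[
\Pr[\Emb_D(R_s,S)]\le 2^{2m+m^2+1}\cdot 2^{-7c\lambda m^2(\log m)^2}\le e^{-m^2}
\]
for $m\ge m_0$ large relative to $\lambda$.

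\textbf{Main obstacle.} I expect the delicate step to be the comparison $\sum_i A_{b_i}\ge(1+\Omega(\xi_m))d(s)$. This step must absorb three losses into the $\xi_m$ slack in $d(s)$:
\begin{itemize}
\item the $O(k)$ chain vertices;
\item the drop of $F$ across the $\approx 2\log m$ dyadic levels between $s$ and $b_i$, which costs $O(\varepsilon(\log m)^2/m)$;
\item the integer roundings in $m_{b_i}$.
\end{itemize}
This is precisely where $\varepsilon\ll\lambda$ is used. A secondary point to verify is that the exponent $7c\lambda m^2(\log m)^2$ beats the $2^k$ choices of $J$. This holds because of the extra $(\log m)^2$ factor, and it needs no further constraint among the constants beyond $m_0$ being large.
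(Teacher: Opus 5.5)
Your proof is correct. The counting step matches the paper's in substance: the paper sums the positive deficits $(A_{b_i}-|S_{p(i)}(r)|)_+$ and divides by $\max_i A_{b_i}\le 2A_s/k$, while you lower-bound $\sum_{i\in J}A_{b_i}$; these are equivalent. The real difference is in how the union bound handles the ports at the branching vertex.

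\textbf{The paper's route.} It fixes $r$ and a bijection between the $k$ children and the ports. It multiplies $b_i^{-10}$ over the deficient children and pays $|S|\,k!=2^{O(m^2\log m)}$ choices. This cost is beaten by $2^{-\Omega(\lambda m^2(\log m)^2)}$.

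\textbf{Your objection to that route does not apply.} You say there are up to $|S|^k$ port assignments. In this model, however, a vertex with $k$ children uses exactly the ports $1,\ldots,k$, so there are only $k!$ assignments.

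\textbf{Your route.} The port-oblivious events $E_i(r)$ are still a valid alternative. They cost a factor $|S|<2^{2m}$ per child, which the exponent $10$ in \eqref{eq:weak} absorbs, leaving $2^{-7m}$ per child. The union over $J$ then costs only $2^{k}=2^{m^2}$. In exchange, your argument never enumerates port assignments. It would survive arbitrary, even unbounded, port names, and it needs less slack than the paper's $k!$ union.

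\textbf{A small inaccuracy.} The $A_{b_i}$ are not in general equal up to a factor $1+O(k^2/s)$. The consecutive $b_i$ can straddle a power of two, so $m_{b_i}$ can take two adjacent values and $F_{m_{b_i}}$ can differ by a factor $1+O(\varepsilon\log m/m)$. You only use that the $A_{b_i}$ are comparable within a constant factor, so this does not affect the argument.
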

\begin{proof}
The mean value theorem applied to $\log F_m$ gives, uniformly for the
chosen child sizes,
\[
 \frac{F_{m_{b_i}}}{F_m}
 \ge 1-C\varepsilon\frac{(\log m)^2}{m}
\]
for an absolute $C$.  Since $\sum b_i=s-O(m^2)$, increasing $m_0$ and
decreasing $\varepsilon/\lambda$ ensure
\begin{equation}\label{eq:sum-weak}
 \sum_i A_{b_i}\ge (1-\xi_m/2)A_s,
 \qquad \max_i A_{b_i}\le \frac{2A_s}{k}.
\end{equation}
Fix $D$ and $|S|\le d(s)=(1-\xi_m)A_s$.  Fix a possible label
$r\in S$ for the branching vertex and a bijection between its children
and ports $1,\ldots,k$.  The corresponding $k$ port classes are disjoint.
By~\eqref{eq:sum-weak}, the sum of the positive deficits
$(A_{b_i}-|S_{p(i)}(r)|)_+$ is at least $\xi_m A_s/2$.
Each deficit is at most $2A_s/k$, so at least $\xi_m k/4$
children have $|S_{p(i)}(r)|<A_{b_i}$.

For this fixed choice, the independent weak induction statements imply
that all the deficient child subtrees embed with probability at most
\[
 \prod_{i\text{ deficient}}b_i^{-10}
 \le 2^{-\Omega(\xi_m k m)}.
\]
There are at most $|S|k!$ choices of $r$ and the bijection.
For sufficiently large $m$, $|S|<2^{2m}$,
$\log(k!)=O(m^2\log m)$, and
$\xi_m k m=\Theta(\lambda m^2(\log m)^2)$.
The latter exponent dominates both the union-bound cost and $m^2$.
Thus the total success probability is at most $e^{-m^2}$.
The unary chain above the branching vertex only adds constraints, so it
cannot increase this upper bound.
\end{proof}

\section{Equal-mass profiles with separated demands}

This section constructs a codebook $\cC$ of child-size profiles for
the weak induction at every $n\in[2^m,2^{m+1})$.  The construction
depends only on $m$, not on the decoder or label set.
Put
\[
 K=\lceil\kappa\log m\rceil,
 \qquad \ell_0=\lceil4\log m\rceil,
 \qquad \ell_1=\left\lfloor\frac{m}{K^2}\right\rfloor.
\]
For each integer $\ell\in[\ell_0,\ell_1]$, let
\[
 h_\ell=\left\lfloor\frac{m}{K\ell}\right\rfloor,
 \quad H_\ell=2^{m-\ell},
 \quad N_\ell=\left\lfloor\frac{2^\ell}{16h_\ell\ell^2}\right\rfloor,
 \quad \delta_\ell=\left\lfloor\frac{H_\ell}{32h_\ell}\right\rfloor.
\]
At level $\ell$ there are $h_\ell$ pairs of blocks.  Each block has
$N_\ell$ child positions.  For pair $r\in\{1,\ldots,h_\ell\}$,
define
\[
 a_{\ell,r}=H_\ell+8h_\ell\delta_\ell
                       +4(h_\ell-r)\delta_\ell,
 \qquad \eta_t=2N_\ell-t\quad(1\le t\le2N_\ell).
\]
A profile bit $z_{\ell,r}\in\{0,1\}$ sets the $t$th child size in
this pair to
\begin{equation}\label{eq:profile}
 s_{\ell,r,t}(z)=a_{\ell,r}+\eta_t+
 \begin{cases}
   z_{\ell,r}\delta_\ell,&1\le t\le N_\ell,\\
  -z_{\ell,r}\delta_\ell,&N_\ell<t\le2N_\ell.
 \end{cases}
\end{equation}
The two signs make the total size of every pair independent of its bit.

\begin{lemma}\label{lem:profiles}
For sufficiently large $m$, the profiles above have the following
properties.
\begin{enumerate}
\item Every child size is a positive integer in $[H_\ell,2H_\ell)$.
All positions across all levels are strictly decreasing in the displayed
order for every choice of the profile bits.
\item Let $B=\sum_{\ell=\ell_0}^{\ell_1}h_\ell$.
Then $B=\Theta(m/\kappa)$.  Choosing $\kappa$ sufficiently small,
there is a binary code $\cC\subseteq\{0,1\}^B$ of minimum Hamming
distance at least $B/4$ and size at least $2^{40m}$.
\item Let $S_\ell$ be the total child mass at level $\ell$ and
$S=\sum_\ell S_\ell$.  All these masses are independent of the codeword,
and
\[
 S_\ell=\Theta(2^m/\ell^2),\qquad
 S=\Theta(2^m/\log m)<2^m/4,\qquad
 \sum_\ell\ell S_\ell=O(2^m\log m).
\]
\item For distinct codewords $z,z'$, their sorted child-size vectors
satisfy
\begin{equation}\label{eq:L1-size}
 \sum_j|s_j(z)-s_j(z')|
 \ge c_1\kappa n\frac{(\log m)^2}{m}
\end{equation}
for an absolute constant $c_1>0$ and all
$n\in[2^m,2^{m+1})$.
\end{enumerate}
\end{lemma}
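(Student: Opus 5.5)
The plan is to verify the four items directly from the definitions. Throughout, the floors cost at most constant factors because every quantity inside a floor is large. The two basic size facts are these. For $\ell\le\ell_1=\lfloor m/K^2\rfloor$ we have $m/(K\ell)\ge K\ge1$, so $h_\ell\ge1$ and
\[
h_\ell\ell=\Theta(m/K)
\]
uniformly in $\ell$. Also $2^{\ell}\ge 2^{\ell_0}\ge m^4$ and $h_\ell\ell^2\le m\cdot m^2$, so $N_\ell\ge m/16$ and $N_\ell=\Theta(2^\ell/(h_\ell\ell^2))$. Finally $m-\ell\ge m/2$ gives $H_\ell/(32h_\ell)\ge 2^{m/2}/(32m)$, so $\delta_\ell=\Theta(H_\ell/h_\ell)$ and $\delta_\ell\gg N_\ell$.

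\emph{Item 1.} For the range, note $a_{\ell,r}\ge H_\ell+8h_\ell\delta_\ell$ and $\eta_t\ge0$, and subtracting at most $\delta_\ell$ still leaves a value $\ge H_\ell$. For the upper end, $8h_\ell\delta_\ell+4(h_\ell-1)\delta_\ell+\delta_\ell+2N_\ell\le 13h_\ell\delta_\ell+2N_\ell\le\tfrac{13}{32}H_\ell+2N_\ell<2H_\ell$.

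For monotonicity I check three kinds of adjacent pairs.
\begin{itemize}
\item Within a block, $\eta_t$ strictly decreases.
\item At the block boundary $t=N_\ell\to N_\ell+1$, $\eta$ drops by $1$ and the $z$-term drops by $0$ or $2\delta_\ell$.
\item From pair $r$ to pair $r+1$, the minimum of pair $r$ is $a_{\ell,r}-\delta_\ell$. The maximum of pair $r+1$ is at most $a_{\ell,r}-4\delta_\ell+2N_\ell-1+\delta_\ell$. So we need $2\delta_\ell>2N_\ell-1$, which holds since $\delta_\ell\gg N_\ell$.
\end{itemize}
Across levels, level $\ell+1$ sizes are $<2H_{\ell+1}=H_\ell\le$ all level-$\ell$ sizes.

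\emph{Item 2.} Since $h_\ell=\Theta(m/(K\ell))$,
\[
B=\Theta\Bigl(\tfrac mK\sum_{\ell_0}^{\ell_1}\tfrac1\ell\Bigr)=\Theta\Bigl(\tfrac mK\ln\tfrac{\ell_1}{\ell_0}\Bigr).
\]
Here $\ell_1/\ell_0\ge m^{1/2}$ for large $m$, so the logarithm is $\Theta(\log m)$ and $B=\Theta(m/\kappa)$. The Gilbert--Varshamov bound then gives a code of minimum distance $\ge B/4$ and size $\ge 2^{(1-H_2(1/4))B}\ge 2^{0.18B}$. Taking $\kappa$ small makes $0.18B\ge40m$.

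\emph{Item 3.} Within each pair the $\pm z\delta_\ell$ terms cancel, so $S_\ell$ does not depend on $z$. Its size is $S_\ell=2h_\ell N_\ell\cdot\Theta(H_\ell)=\Theta(2^m/\ell^2)$ by the estimate for $N_\ell$. Summing, $S=\Theta(2^m\sum_{\ell\ge\ell_0}\ell^{-2})=\Theta(2^m/\log m)$.

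For the explicit bound $S<2^m/4$, use $S_\ell\le 2h_\ell N_\ell\cdot2H_\ell\le 2^m/(4\ell^2)$. Then
\[
S\le 2^m/(4(\ell_0-1))<2^m/4.
\]
Similarly $\sum_\ell\ell S_\ell=O(2^m\sum_{\ell\le m}1/\ell)=O(2^m\log m)$.

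\emph{Item 4.} By item 1, the displayed order is the sorted order for every codeword, so sorted vectors are compared position by position. A pair $(\ell,r)$ contributes $2N_\ell\delta_\ell$ to the $L^1$ distance if $z_{\ell,r}\ne z'_{\ell,r}$ and $0$ otherwise. By the basic estimates,
\[
2N_\ell\delta_\ell=\Theta\bigl(2^m/(h_\ell\ell)^2\bigr)=\Theta(2^mK^2/m^2),
\]
uniformly in $\ell$. Multiplying by the Hamming distance $\ge B/4=\Theta(m\log m/K)$ gives a total of $\Theta(2^mK\log m/m)=\Theta(\kappa2^m(\log m)^2/m)$. This is $\ge c_1\kappa n(\log m)^2/m$ because $n<2^{m+1}$.

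The main obstacle I expect is the uniform bookkeeping for the floors. Item 4 needs the contribution per differing bit to be comparable across all levels. This relies on $h_\ell\ell=\Theta(m/K)$ holding even near $\ell_1$, where $h_\ell$ is as small as about $K$; the restriction $\ell\le m/K^2$ is exactly what guarantees it. The separation $\delta_\ell\gg N_\ell$ used in item 1 also needs care; it uses $\ell\le m/2$.
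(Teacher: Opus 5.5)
Your proof is correct and follows the paper's argument essentially step by step: the same floor bookkeeping, the pair-boundary gap $2\delta_\ell-2N_\ell+1>0$, a Gilbert--Varshamov code, and the per-flipped-bit contribution $2N_\ell\delta_\ell=\Theta(2^mK^2/m^2)$; your explicit bound $S_\ell\le 2^m/(4\ell^2)$ is a slightly cleaner route to $S<2^m/4$ than the paper's appeal to large $m_0$. One trivial slip: in item 1 your chain bounds the excess over $H_\ell$, so it should end with $<H_\ell$ rather than $<2H_\ell$, and this holds because $2N_\ell\ll\delta_\ell\le H_\ell/32$.
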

\begin{proof}
Uniformly over the stated range,
$K=\Theta(\kappa\log m)$, $h_\ell\ge K-O(1)$,
and the floor operations change the relevant products by only constant
factors.  For large $m$, the quantities inside both floors defining
$N_\ell$ and $\delta_\ell$ exceed $2$, so
\[
 N_\ell\ge\frac{2^\ell}{32h_\ell\ell^2},\qquad
 \delta_\ell\ge\frac{H_\ell}{64h_\ell},\qquad
 \frac{\delta_\ell}{N_\ell}
 \ge\frac14\,2^{m-2\ell}\ell^2>1.
\]
Here $\ell\le m/K^2=o(m)$; the same bounds give
$2N_\ell=o(H_\ell)$.  Also
$h_\ell\delta_\ell\le H_\ell/32$.
The slope $\eta_t$ makes sizes strictly decrease inside each pair.
At the boundary of consecutive pairs, the smallest possible gap is
$2\delta_\ell-2N_\ell+1>0$; the
level intervals $[H_\ell,2H_\ell)$ are disjoint.
The same inequalities put all sizes below $2H_\ell$.

The harmonic sum gives
\[
 B=\sum_{\ell=\ell_0}^{\ell_1}
       \left\lfloor\frac{m}{K\ell}\right\rfloor
 =\Theta\left(\frac{m}{K}\log\frac{\ell_1}{\ell_0}\right)
 =\Theta(m/\kappa).
\]
A greedy binary code that removes Hamming balls of radius
$\lfloor B/4\rfloor$ has at least
$2^{(1-H_2(1/4)-o(1))B}$ codewords, where $H_2$ is binary entropy.
The implicit lower constant in $B=\Theta(m/\kappa)$ permits a fixed
sufficiently small $\kappa$ for which this is at least $2^{40m}$.

Each pair has total mass
$2N_\ell a_{\ell,r}+\sum_t\eta_t$.
Since $a_{\ell,r}=\Theta(H_\ell)$ and
$h_\ell N_\ell=\Theta(2^\ell/\ell^2)$,
$S_\ell=\Theta(2^m/\ell^2)$.  Summing $1/\ell^2$ and $1/\ell$
proves the two displayed aggregate estimates; the upper bound on $S$
follows by increasing $m_0$.

Flipping one bit changes the $L_1$ size vector by exactly
$2N_\ell\delta_\ell$.  Uniformly in $\ell$,
\[
 2N_\ell\delta_\ell
 =\Omega\left(\frac{2^m}{h_\ell^2\ell^2}\right)
 =\Omega\left(\frac{2^mK^2}{m^2}\right).
\]
There are at least $B/4$ flipped bits between codewords.  Since
$K^2B/m^2=\Omega(\kappa(\log m)^2/m)$ and
$n<2^{m+1}$, this proves~\eqref{eq:L1-size}.
\end{proof}

For a target size $n\in[2^m,2^{m+1})$, add one further child of fixed
size
\begin{equation}\label{eq:heavy}
 h=n-1-S.
\end{equation}
It is larger than every profile child for large $m$, and
$m_h\in\{m-1,m\}$.  All codewords now give strictly decreasing
child-size vectors with the same number of children, total size $n-1$,
and the same total mass at every dyadic level.

\section{The weak distribution and proof of the lower bound}

We first compare the strong-distribution demands for two profiles.
For a codeword $z$, write $s_j(z)$ for its sorted child sizes, including
the fixed child~\eqref{eq:heavy}, and let
\[
  T=\sum_j d(s_j(z)),\qquad P=A_n-T.
\]
Both $T$ and $P$ are independent of $z$, since within each dyadic
level $d(s)$ is linear in $s$, every level mass is fixed, and the
extra child has fixed size.  Also $P\ge0$, since $d(s)\le\theta sF_m$
for all child sizes $s<n$.

\begin{lemma}\label{lem:separation}
After choosing $\lambda$ sufficiently small relative to $\kappa$
and $\varepsilon$ sufficiently small relative to $\lambda$,
for every distinct $z,z'\in\cC$,
\begin{equation}\label{eq:separation}
 \sum_j|d(s_j(z))-d(s_j(z'))|>2P.
\end{equation}
\end{lemma}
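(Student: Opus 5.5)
The plan is to compute both sides of \eqref{eq:separation} exactly using the fact that $d$ is linear on each dyadic level. We then bound the left side from below via \eqref{eq:L1-size} and the right side from above by a quantity of order $(\lambda+\varepsilon)\theta F_m n(\log m)^2/m$. Throughout, $m=m_n$ and $n\in[2^m,2^{m+1})$.

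\emph{Left side.} By Lemma~\ref{lem:profiles}(1), for every codeword the children are listed in the same strictly decreasing order, so position $j$ belongs to the same level in both $z$ and $z'$. If that level is $\ell$, then $s_j\in[H_\ell,2H_\ell)$, so $m_{s_j}=m-\ell$. The heavy child~\eqref{eq:heavy} occupies the same position with the same size for every codeword and contributes nothing. Hence
\[
\sum_j|d(s_j(z))-d(s_j(z'))|=\sum_j \theta(1-\xi_{m-\ell(j)})F_{m-\ell(j)}\,|s_j(z)-s_j(z')|.
\]
Since $\xi\le 1/4$, the coefficient is at least $\tfrac34\theta F_{m-\ell}$. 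As in Lemma~\ref{lem:amplify}, the mean value theorem for $\log F$ gives $F_{m-\ell}/F_m\ge 1-C\varepsilon\,\ell\log m/m$. Because $\ell\le m/K^2$, this ratio is at least $2/3$ for large $m$. Combined with \eqref{eq:L1-size}, the left side is therefore at least $\tfrac12 c_1\kappa\,\theta F_m\, n(\log m)^2/m$.

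\emph{Upper bound on $P$.} Since $\sum_j s_j=n-1$, we can write
\[
P=\theta F_m n-\sum_j\theta(1-\xi_{m_{s_j}})F_{m_{s_j}}s_j
=\theta F_m+\theta\sum_j s_j\bigl[(F_m-F_{m_{s_j}})+\xi_{m_{s_j}}F_{m_{s_j}}\bigr].
\]
For the heavy child, $m_h\in\{m-1,m\}$, so its bracket is at most $F_m\bigl(O(\varepsilon\log m/m)+\xi_{m-1}\bigr)$. We have $\xi_{m-1}\le 2\lambda(\log m)^2/m$, and $h\le n$. For a level-$\ell$ child, $F_m-F_{m-\ell}\le F_m\cdot O(\varepsilon\ell\log m/m)$. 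Since $\ell=o(m)$, also $\xi_{m-\ell}\le 2\lambda(\log m)^2/m$. Summing against the level masses and using $S=O(2^m/\log m)$ and $\sum_\ell \ell S_\ell=O(2^m\log m)$ from Lemma~\ref{lem:profiles}(3) gives
\[
P\le \theta F_m\Bigl(1+O\bigl((\lambda+\varepsilon)\,n(\log m)^2/m\bigr)\Bigr).
\]

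\emph{Comparison.} It remains to show $2P$ is below the left-side bound. The additive $\theta F_m$ term is negligible, since $n(\log m)^2/m\to\infty$. So $2P<\tfrac12 c_1\kappa\theta F_m n(\log m)^2/m$ as soon as $\lambda$ and $\varepsilon$ are small multiples of $c_1\kappa$ and $m_0$ is large. This is the reason for the ordering $\varepsilon\ll\lambda\ll\kappa$.

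The main obstacle is the heavy child. It carries almost all of the mass $n$, and its $\xi$-loss is of order exactly $\lambda(\log m)^2/m$ relative to $n$, which is the same order as the separation itself. So the comparison hinges on the constant: $\lambda$ must be chosen small against $c_1\kappa$. The other point needing care is the positional alignment of levels across codewords, which Lemma~\ref{lem:profiles}(1) supplies.
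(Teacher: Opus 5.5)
Your proposal is correct and follows the same route as the paper. You lower-bound the left side by combining the level-wise linearity of $d$ with \eqref{eq:L1-size}, and you bound $P$ by the root term $\theta F_m$ plus the per-child losses $\theta s\bigl(F_m-(1-\xi_{m_s})F_{m_s}\bigr)$, using the regularity estimates for $F$ and $\xi$ and the mass bounds of Lemma~\ref{lem:profiles}. Your observation that the heavy child's $\xi$-loss is of the same order as the separation, and hence forces $\lambda\ll\kappa$, is exactly what the paper's final constant choice $2C''(\varepsilon+\lambda)<c_2\kappa$ encodes.
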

\begin{proof}
All profile child sizes at level $\ell$ lie in
$[2^{m-\ell},2^{m-\ell+1})$.  Uniformly for
$\ell\le m/K^2=o(m)$,
\begin{equation}\label{eq:regularity}
 \frac{F_{m-\ell}}{F_m}\ge\frac12,
 \qquad 1-\xi_{m-\ell}\ge\frac12,
 \qquad
 0\le 1-\frac{F_{m-\ell}}{F_m}
 \le C\varepsilon\frac{\ell\log m}{m},
 \qquad
 \xi_{m-\ell}\le C\lambda\frac{(\log m)^2}{m}.
\end{equation}
The first and third estimates follow from the mean value theorem
for $\log F_x=\varepsilon(\log(x+2))^2$;
the other two follow directly from $\xi_x$.
Within a level, the coefficient $d(s)/s$ is fixed and at least
$\theta F_m/4$.  Equation~\eqref{eq:L1-size} therefore gives
\begin{equation}\label{eq:demand-gap}
 \sum_j|d(s_j(z))-d(s_j(z'))|
 \ge c_2\theta\kappa nF_m\frac{(\log m)^2}{m}
\end{equation}
for an absolute $c_2>0$.  The fixed child contributes zero.

Using $n=1+h+\sum_\ell S_\ell$, we bound $P$ by summing
$\theta s(F_m-(1-\xi_{m_s})F_{m_s})$ over all children,
plus $\theta F_m$ for the root.  By~\eqref{eq:regularity} and
Lemma~\ref{lem:profiles}, the profile children contribute at most
\[
 C\theta F_m\left(
   \frac{\varepsilon\log m}{m}\sum_\ell\ell S_\ell
  +\frac{\lambda(\log m)^2}{m}\sum_\ell S_\ell
 \right)
 \le C'\theta(\varepsilon+\lambda)nF_m
             \frac{(\log m)^2}{m}.
\]
The fixed child has level $m$ or $m-1$, and obeys the same bound;
the lone root term is negligible.  Thus
\begin{equation}\label{eq:slack}
 P\le C''\theta(\varepsilon+\lambda)nF_m
               \frac{(\log m)^2}{m}.
\end{equation}
Choose $\lambda$ and then $\varepsilon$ so that
$2C''(\varepsilon+\lambda)<c_2\kappa$.
Equations~\eqref{eq:demand-gap} and~\eqref{eq:slack}
give~\eqref{eq:separation}.
\end{proof}

Define $W_n$ by choosing $z$ uniformly from $\cC$ and,
independently for each child size $s_j(z)$, sampling an
$R_{s_j(z)}$ tree.  Attach these trees below a fresh root.
Every child size is less than $n$, so the strong induction applies.

\begin{lemma}\label{lem:weak}
For sufficiently large $m_0$, $W_n$ satisfies~\eqref{eq:weak}.
\end{lemma}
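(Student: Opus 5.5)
Fix $D$ and a label set $S$ with $|S|\le A_n$. The plan is to bound $\Pr[\Emb_D(W_n,S)]$ by conditioning on the codeword $z$ and taking a union bound over the root label $r\in S$ and the assignment of ports to children. Once $r$ and the port assignment are fixed, Lemma~\ref{lem:interface} says the child subtrees live in the disjoint classes $S_{p(j)}(r)$. Writing $c_p=|S_p(r)|$, we have $\sum_p c_p\le |S|-1\le A_n$, since $r$ itself lies in no class.

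\textbf{Step 1: many deficient children.} For a fixed $z$, fixed $r$ and fixed injection $j\mapsto p(j)$, call child $j$ deficient if $c_{p(j)}<d(s_j(z))$. By~\eqref{eq:strong}, applied independently to the $R_{s_j(z)}$ subtrees, the probability that all children embed is at most $\prod_{j\text{ deficient}}e^{-m_{s_j}^2}$. Each deficient child has $m_{s_j}\ge m-\ell_1\ge m/2$, so each one costs at least $e^{-m^2/4}$.

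\textbf{Step 2: codeword selection.} Now fix $r$ only, which fixes the multiset of class sizes $\{c_p\}$. The number of children is at most the total number of positions, which is at most $2^m$. I expect the main obstacle to be showing that, for all but at most one codeword $z$, every port assignment leaves a deficient child. Suppose two distinct codewords $z,z'$ each admit an assignment with no deficient child. Then each demand vector $(d(s_j(z)))_j$ is dominated coordinatewise by a vector of distinct class sizes, and the total slack satisfies $\sum_j (c_{p(j)}-d(s_j(z)))\le A_n-T=P$, and likewise for $z'$. The demand vectors are sorted decreasingly, because $d$ is increasing and the sizes are strictly decreasing by Lemma~\ref{lem:profiles}. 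A rearrangement argument then shows that sorting the dominating class sizes preserves domination, so we may take both assignments to use the sorted top class sizes $c_{(1)}\ge c_{(2)}\ge\cdots$. Both demand vectors then lie within $L_1$-distance $P$ of the same vector $(c_{(j)})_j$ (including unused classes in the slack), which gives
\[
\sum_j|d(s_j(z))-d(s_j(z'))|\le 2P.
\]
This contradicts Lemma~\ref{lem:separation}. The delicate point is making the rearrangement step rigorous: among dominating injections, the sorted one minimizes total slack. Note also that the total slack is at most $|S|-1-T\le P$.

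\textbf{Step 3: union bound.} For each $r$, at most one codeword is "good". This contributes probability at most $1/|\cC|\le 2^{-40m}$ per $r$, hence at most $|S|\,2^{-40m}\le 2^{2m}2^{-40m}$ in total, using $|S|\le A_n<2^{2m}$. For the remaining codewords, every assignment has at least one deficient child. Union bounding over at most $(\#\text{children})!$ port assignments would be too costly, so the fix is to avoid unioning over assignments. Instead, require that the children embed into disjoint classes, and define the bad event through children whose size exceeds every available class in the sorted matching. Concretely, for fixed $r$ and $z$, let $J$ be the first index $j$ with $c_{(j)}<d(s_j(z))$. In any assignment, some child $j'\le J$ must map to a class of size at most $c_{(J)}<d(s_J)\le d(s_{j'})$, by pigeonhole. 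So some child among the first $J$ is deficient. The probability that any of them embeds in a class of size at most $c_{(J)}$ is then bounded by a union over $j'\le J$ and over candidate classes. Monotonicity of~\eqref{eq:strong} in $|S|$ lets us treat the at most $2^{2m}$ classes collectively, giving at most $2^m\cdot 2^{2m}e^{-m^2/4}$. Summing over $r$ gives $2^{5m}e^{-m^2/4}$, which is negligible. The total is therefore at most $2^{-38m}+2^{5m}e^{-m^2/4}\le n^{-10}$ for $m\ge m_0$.
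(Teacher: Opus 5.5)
Your proposal is correct and is essentially the paper's proof. For each root label $r$, at most one codeword can be dominated by the sorted capacities. You derive the contradiction with Lemma~\ref{lem:separation} via the triangle inequality through $(c_{(j)})_j$, while the paper uses $\sum_j\max\{d(s_j(z)),d(s_j(z'))\}=T+\frac12\sum_j|d(s_j(z))-d(s_j(z'))|$. Both then take a union bound over deficient child--port pairs and root labels instead of over bijections, which is what keeps the count polynomial.

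The only slips are harmless.
\begin{itemize}
\item The sorted top-capacity assignment maximizes the used capacity; it does not minimize it. The step you flag as delicate is just the pigeonhole fact that this assignment dominates whenever some injection does. Its slack is at most $|S|-T\le P$ by summing over all classes, which is the bound you actually use.
\item The root label $r$ may itself lie in some class, because $D(r,r)$ is unconstrained. So write $|S|$ rather than $|S|-1$; only $|S|\le A_n$ is needed anyway.
\end{itemize}
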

\begin{proof}
Fix $D$ and $S$ with $|S|\le A_n$.  For a possible root label
$r\in S$, arrange the capacities of its actual child-port classes
in decreasing order, $c_1\ge\cdots\ge c_q$.  The number $q$ of children
is fixed over the codebook.  Arrange the corresponding demands
$d(s_j(z))$ in decreasing order.  This is their size order because
$d(s)$ is increasing.  Say that $r$ \emph{covers} $z$ if
$c_j\ge d(s_j(z))$ for every $j$.

One $r$ covers at most one codeword.  Indeed, were it to cover
$z$ and $z'$, then by Lemma~\ref{lem:separation},
\[
 \sum_j c_j\ge\sum_j\max\{d(s_j(z)),d(s_j(z'))\}
 =T+\tfrac12\sum_j|d(s_j(z))-d(s_j(z'))|
 >T+P=A_n,
\]
contrary to the disjointness of port classes in $S$.
Thus the probability that some $r\in S$ covers the random profile
is at most $|S|/|\cC|$.

Now condition on a profile not covered by a given $r$.
If every child could be matched to a port of sufficient capacity, then
sorting both lists in decreasing order would give
$c_j\ge d(s_j(z))$ for all $j$: the $j$ largest demands need $j$
capacities at least as large as the $j$th demand.
For every bijection between its children and ports, some child
$i$ is mapped to a port $p$ with
$|S_p(r)|<d(s_i)$.  If the full tree embeds in $S$ with root label
$r$, then the $i$th independently sampled child tree embeds in
$S_p(r)$.  Taking a union over at most $q^2\le n^2$ possible
child--port pairs, the strong induction statement bounds this
probability by
$n^2\exp[-(0.9m)^2]$ for large $m$; all child sizes have
$m_{s_i}\ge0.9m$.  Finally, union over at most $|S|<n^2$ possible
root labels gives
\[
 \Pr[\Emb_D(W_n,S)]
 \le \frac{|S|}{|\cC|}
       +|S|n^2e^{-(0.9m)^2}
 \le \frac{n^2}{2^{40m}}+n^4e^{-(0.9m)^2}
 \le n^{-10}.
\]
The union over child--port pairs implicitly allows every adaptive
choice of port numbering.
\end{proof}

Lemmas~\ref{lem:amplify} and~\ref{lem:weak} complete the induction
in Proposition~\ref{prop:induction}.  To prove
Theorem~\ref{thm:main}, suppose all labels of a scheme for
$n$-vertex trees have length at most $L$.  All of them then belong
to the same universe of at most $2^{L+1}-1$ binary strings.
Proposition~\ref{prop:induction} says that if this universe has
at most $A_n=\theta nF_{m_n}$ strings, some $n$-vertex tree cannot
be labeled correctly, regardless of its port assignment.
Consequently
\[
 L\ge \log n+\log F_{m_n}+\log\theta-O(1)
   =\log n+\varepsilon(\log\log n)^2-O(1).
\]
This proves Theorem~\ref{thm:main}; Theorem~\ref{thm:tight}
follows from the known upper bound
\cite{GawrychowskiJanczewskiLopuszanski2021}.

\paragraph{Scope.}
The proof is nonconstructive at the level of a single explicit hard
tree and gives no lower bound on query time.  Its quantifiers cover
even decoders chosen separately for each $n$, because the two
distributional induction statements hold for each fixed decoder.

\bibliographystyle{plain}
\bibliography{references}
\end{document}